\documentclass[english]{article}
\usepackage{amsmath}
\usepackage{amsthm}
\usepackage{newtxmath}
\usepackage[LGR,T1]{fontenc}
\usepackage[utf8]{inputenc}
\usepackage{babel}
\usepackage{refstyle}
\usepackage{mathtools}
\usepackage{geometry}
\usepackage{wasysym}
\usepackage[pdfusetitle,
 bookmarks=true,bookmarksnumbered=false,bookmarksopen=false,
 breaklinks=false,pdfborder={0 0 1},backref=false,colorlinks=false]
 {hyperref}

\makeatletter

\RS@ifundefined{subsecref}{
  \newref{subsec}{
        name      = \RSsectxt,
        names     = \RSsecstxt,
        Name      = \RSSectxt,
        Names     = \RSSecstxt,
        refcmd    = {\S\ref{#1}},
        rngtxt    = \RSrngtxt,
        lsttwotxt = \RSlsttwotxt,
        lsttxt    = \RSlsttxt
  }
}{}

\@ifundefined{date}{}{\date{}}
\usepackage{babel}

\usepackage{tikz}
\usetikzlibrary{positioning}
\usepackage{refstyle}
\usetikzlibrary{arrows.meta}

\providecommand{\corollaryname}{Corollary}
\providecommand{\definitionname}{Definition}
\providecommand{\propositionname}{Proposition}
\providecommand{\remarkname}{Remark}
\providecommand{\theoremname}{Theorem}

\makeatother

\theoremstyle{plain}
\newtheorem{thm}{\protect\theoremname}
\theoremstyle{remark}
\newtheorem{rem}[thm]{\protect\remarkname}
\theoremstyle{definition}
\newtheorem{defn}[thm]{\protect\definitionname}
\theoremstyle{plain}
\newtheorem{prop}[thm]{\protect\propositionname}
\newtheorem{cor}[thm]{\protect\corollaryname}
\newref{thm}{
        name      = \RSthmtxt,
        names     = \RSthmstxt,
        Name      = \RSThmtxt,
        Names     = \RSThmstxt,
        rngtxt    = \RSrngtxt,
        lsttwotxt = \RSlsttwotxt,
        lsttxt    = \RSlsttxt
}

\newref{rem}{
        name      = \RSremtxt,
        names     = \RSremstxt,
        Name      = \RSRemtxt,
        Names     = \RSRemstxt,
        rngtxt    = \RSrngtxt,
        lsttwotxt = \RSlsttwotxt,
        lsttxt    = \RSlsttxt
}

\newref{def}{
        name      = \RSdefntxt,
        names     = \RSdefnstxt,
        Name      = \RSDefntxt,
        Names     = \RSDefnstxt,
        rngtxt    = \RSrngtxt,
        lsttwotxt = \RSlsttwotxt,
        lsttxt    = \RSlsttxt
}

\newref{prop}{
        name      = \RSproptxt,
        names     = \RSpropstxt,
        Name      = \RSProptxt,
        Names     = \RSPropstxt,
        rngtxt    = \RSrngtxt,
        lsttwotxt = \RSlsttwotxt,
        lsttxt    = \RSlsttxt
}

\newref{cor}{
        name      = \RScortxt,
        names     = \RScorstxt,
        Name      = \RSCortxt,
        Names     = \RSCorstxt,
        rngtxt    = \RSrngtxt,
        lsttwotxt = \RSlsttwotxt,
        lsttxt    = \RSlsttxt
}

\def\RScortxt{corollary~}
\def\RScorstxt{corollaries~}
\def\RSCortxt{Corollary~}
\def\RSCorstxt{Corollaries~}
\def\RSdefntxt{definition~}
\def\RSdefnstxt{definitions~}
\def\RSDefntxt{Definition~}
\def\RSDefnstxt{Definitions~}
\def\RSproptxt{proposition~}
\def\RSpropstxt{propositions~}
\def\RSProptxt{Proposition~}
\def\RSPropstxt{Propositions~}
\def\RSremtxt{remark~}
\def\RSremstxt{remarks~}
\def\RSRemtxt{Remark~}
\def\RSRemstxt{Remarks~}
\def\RSthmtxt{theorem~}
\def\RSthmstxt{theorems~}
\def\RSThmtxt{Theorem~}
\def\RSThmstxt{Theorems~}
\providecommand{\corollaryname}{Corollary}
\providecommand{\definitionname}{Definition}
\providecommand{\propositionname}{Proposition}
\providecommand{\remarkname}{Remark}
\providecommand{\theoremname}{Theorem}

\begin{document}
\title{ocLTL: LTL Realizability and Synthesis Modulo $\omega$-Categorical
Structures}
\author{Ohad Asor\thanks{ohad@idni.org}}
\maketitle
\begin{abstract}
We introduce $\mathsf{ocLTL}$, the case of LTL+P modulo $\omega$-categorical
theories. We reduce its realizability and synthesis problems into
the corresponding problems in propositional LTL+P. The core of the
reduction replaces each data subformula with a finite disjunction
over complete types. The complexity remains $\text{2-EXPTIME}$ with
an additional blowup that depends only on the theory but not the formula.
We demonstrate an application of this framework that is related to
atomless Boolean algebras and Lindenbaum-Tarski algebras while drawing
a connection to AI safety.
\end{abstract}

\section{Introduction}

LTL modulo infinite data domain is an active area of research \cite{DD07,BDMSDS11,DL09,FKPS19,RS24,GGG22,ADPS25}.
There have been some decidability and semi-decidability \cite{DD07,BDMSDS11,EF21,KMB18,EFK22,RS24,GGG22}
and undecidability \cite{BDMSDS11,DL09,KK19,BP22} results. In this
paper we focus on realizability and synthesis of LTL modulo $\omega$-categorical
structures. In contrast to some works \cite{RS24,FKPS19,FHP22}, we
do allow direct comparisons of data values across time steps within
data predicates, in a bounded fashion. Full LTL+Past over infinite
traces is supported. The key tool is the Ryll-Nardzewski theorem:
a countable structure is $\omega$-categorical iff it has finitely
many complete $k$-types for each $k$, all of which are isolated.

\section{Preliminaries}

Fix $\mathcal{M}$ to be a countable $\omega$-categorical structure
over a signature $\Sigma$ and domain $M$. We denote by $T_{k}$
the set of all complete $k$-types in $\mathcal{M}$. 
\begin{thm}[Ryll-Nardzewski, Engeler, Svenonius \cite{Hodges93}]
\label{thm:rn}The following are equivalent:
\begin{enumerate}
\item $\mathcal{M}$ is $\omega$-categorical.
\item For each $k$, $\mathcal{M}$ has only finitely many complete $k$-types.
\item Each complete $k$-type in $\mathcal{M}$ contains only finitely many
formulas up to logical equivalence (i.e. it is \emph{isolated}).
\item Two tuples from $\mathcal{M}$ have the same type, iff they lie in
the same $\mathrm{Aut}\left(\mathcal{M}\right)$-orbit.
\end{enumerate}
\end{thm}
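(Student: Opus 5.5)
The plan is to prove the cycle $(1)\Rightarrow(3)\Rightarrow(2)\Rightarrow(1)$ in the classical way for a countable complete theory $T=\mathrm{Th}(\mathcal{M})$ in a countable signature. Then I will prove $(1)\Rightarrow(4)$ by a back-and-forth argument. Finally I will check what the converse $(4)\Rightarrow(1)$ actually requires, because the worded version of $(4)$ is weaker than the standard one.

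For $(1)\Rightarrow(3)$, suppose some complete $k$-type $p$ is non-isolated. By the omitting types theorem, $T$ then has a countable model omitting $p$. It also has a countable model realizing $p$: take a countable elementary substructure of a model realizing $p$, via downward L\"owenheim--Skolem. These two countable models are not isomorphic, which contradicts $\omega$-categoricity. For $(3)\Rightarrow(2)$, suppose there are infinitely many complete $k$-types and all of them are isolated. Then the set of formulas $\{\neg\theta:\theta\text{ isolates some }p\in T_{k}\}$ is finitely satisfiable, by compactness together with completeness of $T$. Hence it extends to a complete type that contains no isolating formula, and this type is non-isolated, a contradiction. (I expect this step to need the observation that finitely many isolating formulas cannot cover the whole Stone space when that space is infinite.) For $(2)\Rightarrow(1)$, finitely many $n$-types means every type is isolated. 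In particular, every $n+1$-type over a realized $n$-tuple is isolated by a single formula. So any two countable models are atomic, and a standard back-and-forth builds an isomorphism between them: at each step, the isolating formula of the current tuple's type guarantees that the required extension exists in the other model.

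For $(1)\Rightarrow(4)$, one direction is trivial: automorphisms preserve types. For the other direction, let $\bar a,\bar b$ have the same type. Since $\mathcal{M}$ is countable and atomic, the same back-and-forth, started from $\bar a\mapsto\bar b$ instead of the empty map, yields an automorphism sending $\bar a$ to $\bar b$. This is the homogeneity of countable atomic models.

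The main obstacle is $(4)\Rightarrow(1)$ as literally worded. Condition $(4)$ only says that $\mathcal{M}$ is homogeneous in the sense that types coincide with orbits, and this does not force finiteness. For example, $(\mathbb{Z},<)$ satisfies $(4)$: the automorphisms are the translations, a pair's orbit is determined by its difference, and the difference is definable, so types and orbits coincide. Yet $(\mathbb{Z},<)$ has infinitely many $2$-types and is not $\omega$-categorical. So this direction cannot be proved from the statement as written. What I would actually prove is the version that the paper uses downstream: $(4)$ together with ``$\mathrm{Aut}(\mathcal{M})$ has finitely many orbits on $M^{k}$'' implies $(2)$, since types are then in bijection with orbits. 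Under $(1)$, the orbit condition follows from $(4)$ and $(2)$. I would flag that the clause in $(4)$ should be read as a consequence of $(1)$ rather than as an equivalent condition.
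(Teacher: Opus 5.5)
Your proposal is correct. For $(1)\Leftrightarrow(2)\Leftrightarrow(3)$ and $(1)\Rightarrow(4)$ it is the standard textbook argument: omitting types, the compactness argument that an infinite Stone space $S_k$ has a non-isolated point, and back-and-forth for the uniqueness and homogeneity of countable atomic models. The paper gives no proof of its own; it only cites Hodges, which points to exactly this argument.

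Your objection to $(4)\Rightarrow(1)$ is also right. In $(\mathbb{Z},<)$ the automorphisms are the translations and every difference $a_i-a_j$ is definable, so types coincide with orbits. Yet there are infinitely many $2$-types. An even simpler example is $(\mathbb{N},<)$, which is rigid and has every element definable. The standard form of this clause is that $\mathrm{Aut}(\mathcal{M})$ has finitely many orbits on each $M^{k}$. The paper's clause (4) is only a consequence of (1).

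Your remark that the paper uses a converse downstream is inaccurate. The paper invokes the theorem only for $(1)\Rightarrow(2),(3)$ (Remark~\ref{rem:iota} and the fixed-point elimination) and for $(1)\Rightarrow(4)$ (the proof of Proposition~\ref{prop:ext} and the identification $T_k=M^k/\mathrm{Aut}(\mathcal{M})$). You prove all of these, so nothing later in the paper is affected.

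Two small refinements. First, in your repair the extra hypothesis (4) is superfluous. However, ``types are in bijection with orbits'' only bounds the \emph{realized} $k$-types. To conclude that $S_k(\mathrm{Th}(\mathcal{M}))$ is finite, take formulas $\theta_i$ separating the finitely many realized types. Then $\mathcal{M}\models\forall\bar{x}\bigvee_i\theta_i$, and each $\theta_i$ isolates its type, so no other types exist. Second, you read (3) through the paper's ``i.e.\ isolated''. The literal wording (each type contains only finitely many formulas up to equivalence) is stronger type-by-type. It amounts to finiteness of the Lindenbaum algebra in $k$ variables, so it follows at once from (2).
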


\begin{rem}
\label{rem:iota}By \thmref{rn}, any formula $\phi\left(\bar{x}\right)$
with $k$ free variables can be encoded as a finite set of types.
Define $\iota\left(\phi\right)=\left\{ \tau\in T_{k}\mid\phi\in\tau\right\} $.
Then for any $k$-tuple $\bar{a}$, $\mathcal{M}\models\phi\left(\bar{a}\right)$
iff $\mathrm{tp}\left(\bar{a}\right)\in\iota\left(\phi\right)$. Further,
$\phi$ is logically equivalent to $\bigvee_{\tau\in\iota\left(\phi\right)}\tau$.
\end{rem}

\begin{defn}[Restriction]
\label{def:restr}For $\tau\in T_{n}$ and an index set $I\subseteq\left\{ 1,\dots,n\right\} $,
the restriction $\tau|_{I}\in T_{\left|I\right|}$ is the complete
type obtained by taking only formulas in the type that mention only
the variables $\left\{ x_{i}\right\} _{i\in I}$, after appropriate
renumeration of the indices $x_{1},\dots,x_{\left|I\right|}$.
\end{defn}

It is easy to see that the restriction gives a complete type indeed,
given that the original type is complete. 
\begin{prop}[Extension]
\label{prop:ext}For every $\overline{a}\in M^{k}$ and every $\tau\in T_{k+1}$
s.t. $\mathcal{M}\models\tau|_{\left\{ 1,\dots,k\right\} }\left(\bar{a}\right)$
there exists $b\in M$ s.t. $\mathcal{M}\models\tau\left(\bar{a},b\right)$. 
\end{prop}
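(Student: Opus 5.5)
The plan is to use that every complete type is realized in $\mathcal{M}$ (a consequence of \thmref{rn}), and then move a realizing tuple onto $\bar{a}$ by an automorphism. Here $T_{k}$ is the set of complete $k$-types in $\mathcal{M}$, i.e.\ types realized in $\mathcal{M}$. If one reads $T_{k}$ more broadly as complete types consistent with $\mathrm{Th}(\mathcal{M})$, realization still holds: by item 3 of \thmref{rn}, $\tau$ is isolated by a single formula $\psi(\bar{x},y)$. Since $\tau$ is consistent with the complete theory, $\mathcal{M}\models\exists\bar{x}\exists y\,\psi$, and any witness realizes $\tau$. Either way we get a tuple $(\bar{c},d)\in M^{k+1}$ with $\mathcal{M}\models\tau(\bar{c},d)$.

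Next, $\mathrm{tp}(\bar{c})=\tau|_{\{1,\dots,k\}}$, because the formulas of $\tau$ mentioning only $x_{1},\dots,x_{k}$ hold of $\bar{c}$ and form a complete $k$-type. By hypothesis $\bar{a}$ also realizes $\tau|_{\{1,\dots,k\}}$, and by completeness this means $\mathrm{tp}(\bar{a})=\mathrm{tp}(\bar{c})$. Item 4 of \thmref{rn} then gives $\sigma\in\mathrm{Aut}(\mathcal{M})$ with $\sigma(\bar{c})=\bar{a}$. Set $b=\sigma(d)$. Automorphisms preserve all first-order formulas, so $\mathrm{tp}(\bar{a},b)=\mathrm{tp}(\bar{c},d)=\tau$, i.e.\ $\mathcal{M}\models\tau(\bar{a},b)$.

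An alternative that avoids automorphisms uses only isolation. Let $\psi(\bar{x},y)$ isolate $\tau$. Then $\exists y\,\psi(\bar{x},y)$ lies in $\tau|_{\{1,\dots,k\}}$, since it holds of $\bar{c}$. Since $\bar{a}$ realizes that restriction, $\mathcal{M}\models\exists y\,\psi(\bar{a},y)$. Pick a witness $b$; because $\psi$ isolates $\tau$, we get $\mathcal{M}\models\tau(\bar{a},b)$. I would present this version as the main proof, since it needs only item 3, and mention the orbit argument as a remark.

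The only delicate step is the bookkeeping showing that the existential formula belongs to the restricted type, and that realizing an isolating formula realizes the whole type. The latter holds because $\tau$ is complete and $\psi\vdash\tau$ modulo $\mathrm{Th}(\mathcal{M})$. I expect no real obstacle beyond fixing which notion of $T_{k}$ is meant, which the first paragraph handles.
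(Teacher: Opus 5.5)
Your proposal is correct. The orbit argument in your second paragraph is exactly the paper's proof: realize $\tau$ by some $(\bar c,d)$, observe $\mathrm{tp}(\bar c)=\tau|_{\{1,\dots,k\}}=\mathrm{tp}(\bar a)$, use item 4 of \thmref{rn} to move $\bar c$ to $\bar a$ by an automorphism, and take $b$ to be the image of $d$.

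The isolation argument, which you would make the main proof, is a genuinely different route. It uses only item 3, that $\tau$ is principal. The key steps are that $\exists y\,\psi(\bar x,y)$ lies in $\tau|_{\{1,\dots,k\}}$ and that any solution of $\psi(\bar a,y)$ realizes $\tau$. This route buys you two things.
\begin{enumerate}
\item It needs no homogeneity of $\mathcal{M}$, so the extension property holds in every model of $\mathrm{Th}(\mathcal{M})$. Item 4 depends on $\mathcal{M}$ being the countable model. In uncountable models of an $\omega$-categorical theory, equal type does not imply equal orbit. For example, in the ordered sum $\mathbb{Q}+\mathbb{R}$ as a model of DLO, no automorphism sends a point of the $\mathbb{Q}$-part to a point of the $\mathbb{R}$-part, since one has countably many predecessors and the other uncountably many.
\item It is closer to how $b$ would actually be computed under the paper's effective-presentation assumption. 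You search for a solution of one fixed formula whose existence is guaranteed, instead of appealing to an automorphism that is not constructive.
\end{enumerate}
The paper's route is shorter once item 4 is granted. It also fits the orbit viewpoint the paper uses elsewhere, for instance in the comparison with orbit-finite sets.

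Your first paragraph also supplies a justification the paper leaves implicit in its phrase ``since $\tau$ is realized'': isolated types consistent with $\mathrm{Th}(\mathcal{M})$ are realized in $\mathcal{M}$.
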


\begin{proof}
Since $\tau$ is realized, some $\left(\bar{a}',b'\right)$ has type
$\tau$. Then $\mathrm{tp}\left(\bar{a}'\right)=\tau|_{\left\{ 1,\dots,k\right\} }=\mathrm{tp}\left(\bar{a}\right)$,
so by \thmref{rn}, some automorphism $\alpha$ maps $\bar{a}'$ to
$\bar{a}$. Then $\mathrm{tp}\left(\bar{a},\alpha\left(b'\right)\right)=\mathrm{tp}\left(\bar{a}',b'\right)=\tau$. 
\end{proof}

\begin{defn}[Product structure]
\label{def:prod}For $n>0$ define the power structure $\mathcal{M}^{n}$
to have the set of all $n$-tuples from $M$, inheriting the same
signature extended with the projection functions. 
\end{defn}

The following is immediate too since $\Sigma$ is still operating
over single elements via the projection functions: 
\begin{prop}
\label{prop:prod}If $\mathcal{M}$ is $\omega$-categorical then
so is $\mathcal{M}^{n}$. Moreover, there is a canonical bijection
between the $k$-types of $\mathcal{M}^{n}$ and the $kn$-types of
$\mathcal{M}$.
\end{prop}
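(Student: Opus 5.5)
The plan is to prove both claims at once by building an explicit translation between formulas over $\mathcal{M}^{n}$ and formulas over $\mathcal{M}$, and then to read off $\omega$-categoricity from item 2 of \thmref{rn}.

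\emph{Translation.} Each variable $X$ of $\mathcal{M}^{n}$, which ranges over $n$-tuples, is replaced by $n$ variables $x^{1},\dots,x^{n}$ ranging over $M$. A projection term $\pi_{i}(X)$ becomes $x^{i}$. Every $\Sigma$-symbol is applied only to projected coordinates, so it translates literally. Equality $X=Y$ becomes $\bigwedge_{i}x^{i}=y^{i}$. Quantifiers translate as $\exists X\,\psi\mapsto\exists x^{1}\dots\exists x^{n}\,\psi^{*}$.

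A routine induction on formulas then shows
\[
\mathcal{M}^{n}\models\psi\left(\bar{A}\right)\iff\mathcal{M}\models\psi^{*}\left(\bar{a}\right),
\]
where $\bar{a}\in M^{kn}$ is the flattening of $\bar{A}\in(M^{n})^{k}$.

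\emph{Converse direction.} Every formula $\chi(x_{1},\dots,x_{kn})$ over $\mathcal{M}$ can be expressed in $\mathcal{M}^{n}$. Substitute $\pi_{j}(X_{l})$ for the appropriate variable. Replace each quantifier $\exists x$ by $\exists Z$ and $x$ by $\pi_{1}(Z)$. Because $\pi_{1}$ is surjective onto $M$, this is sound.

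\emph{The bijection.} Flattening $\bar{A}\mapsto\bar{a}$ is a bijection $(M^{n})^{k}\to M^{kn}$. By the two translations, two $k$-tuples of $\mathcal{M}^{n}$ satisfy the same formulas exactly when their flattenings do. Hence $\mathrm{tp}(\bar{A})\mapsto\mathrm{tp}(\bar{a})$ is a well-defined injective map from $k$-types of $\mathcal{M}^{n}$ to $kn$-types of $\mathcal{M}$. It is surjective because every $kn$-type realized in $\mathcal{M}$ is the type of some flattening.

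Here "types" means realized complete types, as in the rest of the paper. This is harmless: in an $\omega$-categorical countable structure every complete type over $\emptyset$ is isolated and therefore realized. If one prefers to include unrealized types, the two translations identify the Lindenbaum algebras of $k$-formulas and $kn$-formulas, so the Stone spaces, and hence the type sets, correspond.

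\emph{$\omega$-categoricity.} By \thmref{rn}, $T_{kn}$ is finite for every $k$. The bijection then gives finitely many complete $k$-types of $\mathcal{M}^{n}$ for every $k$. Since $\mathcal{M}^{n}$ is countable, \thmref{rn} applied to it yields that $\mathcal{M}^{n}$ is $\omega$-categorical.

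\emph{Main obstacle.} The only delicate point is the converse translation, namely simulating quantification over single elements of $M$ inside $\mathcal{M}^{n}$. One must also pin down precisely what the signature of $\mathcal{M}^{n}$ is, since \defref{prod} is informal. I would fix the reading in which $\Sigma$-symbols apply to projection terms, and handle single-element quantification via $\pi_{1}$ as above.
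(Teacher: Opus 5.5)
Your proposal is correct and follows the paper's route. The paper only asserts the proposition as immediate because the $\Sigma$-symbols still act on single elements through the projection functions, and your back-and-forth translation via flattening $(M^{n})^{k}\cong M^{kn}$, with $\pi_{1}$ simulating single-element quantifiers, is precisely the argument behind that remark; your handling of unrealized types via the Lindenbaum algebras and of the final application of the Ryll-Nardzewski theorem to the countable structure $\mathcal{M}^{n}$ fills in details the paper leaves implicit.
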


This extends to a product of distinct $\omega$-categorical structures
too.

Finally, we assume $\mathcal{M}$ is \emph{effectively presented}:
the complete type of any tuple is computable, and the witness $b$
in proposition \propref{ext} can be computed from $\bar{a}$ and
$\tau$.

\section{$\mathsf{ocLTL}\left(\mathcal{M}\right)$}

A reactive system reads inputs and produces outputs, at each time
step. Inputs and outputs take values in a data domain $M$. For simplicity
we assume the number of input and output streams are equal, denote
it by $s$. A \emph{program} is a function $F:\left(M^{\omega}\right)^{s}\rightarrow\left(M^{\omega}\right)^{s}$
that is \emph{causal }, i.e.
\[
\forall n\forall a,b\in\left(M^{\omega}\right)^{s}.\text{pr}_{n}\left(a\right)=\text{pr}_{n}\left(b\right)\to\text{pr}_{n}\left(F\left(a\right)\right)=\text{pr}_{n}\left(F\left(b\right)\right)
\]
where $\text{pr}_{n}$ is the prefix function returning the first
$n$ elements of a word, and its generalization to tuples of words
extends componentwise.

\subsection{Syntax and Semantics}

Fix a \emph{lookback} parameter $\ell\geq0$. A \emph{data formula}
$\delta$ is a first-order $\Sigma$-formula whose free variables
are $x^{i}_{-j},y^{i}_{-j}$ for $1\leq i\leq s$ and $0\leq j\leq\ell$.
The $x$'s are the \emph{inputs} and the $y$'s are the \emph{outputs}.
A trace is a function $\pi:\mathbb{N}\to M^{s}\times M^{s}$. Denote
$\pi_{x^{i}}\left(t\right)$ and $\pi_{y^{i}}\left(t\right)$ for
the $i$'th input and output at time $t$, respectively. 
\begin{rem}
It is easy to support lookahead as well by allowing $x^{i}_{+j},y^{i}_{+j}$
as well, but we keep the lookback-only setting for simplicity.
\end{rem}

\begin{defn}
\label{def:ltlm}$\mathsf{ocLTL}\left(\mathcal{M}\right)$ formulas
have the syntax
\[
\varphi\coloneqq\delta\mid\neg\varphi\mid\varphi_{1}\wedge\varphi_{2}\mid\mathsf{X}\varphi\mid\varphi_{1}\mathsf{U}\varphi_{2}\mid\mathsf{Y}\varphi\mid\varphi_{1}\mathsf{S}\varphi_{2}
\]

under the semantics
\begin{align*}
\pi,t\ge\ell\models\delta\iff & \mathcal{M}\models\delta\left[x^{i}_{-j}\mapsto\pi_{x^{i}}\left(t-j\right),y^{i}_{-j}\mapsto\pi_{y^{i}}\left(t-j\right)\right]\\
\pi,t<\ell\models\delta\iff & \bot\\
\pi,t\models\neg\varphi\iff & \pi,t\not\models\varphi\\
\pi,t\models\varphi_{1}\wedge\varphi_{2}\iff & \pi,t\models\varphi_{1}\wedge\pi,t\models\varphi_{2}\\
\pi,t\models\mathsf{X}\varphi\iff & \pi,t+1\models\varphi\\
\pi,t\models\varphi_{1}\mathsf{U}\varphi_{2}\iff & \exists k\geq t.\pi,k\models\varphi_{2}\wedge\forall t\leq j<k.\pi,j\models\varphi_{1}\\
\pi,t\models\mathsf{Y}\varphi\iff & t>0\wedge\pi,t-1\models\varphi\\
\pi,t\models\varphi_{1}\mathsf{S}\varphi_{2}\iff & \exists k\leq t.\pi,k\models\varphi_{2}\wedge\forall k<j\leq t.\pi,j\models\varphi_{1}
\end{align*}

We write $\mathsf{G}\varphi\coloneqq\neg\left(\top\mathsf{U}\neg\varphi\right)$
and $\mathsf{F}\varphi\coloneqq\top\mathsf{U}\varphi$. A program
\emph{realizes }$\varphi$ if every induced trace $\pi$ satisfies
$\pi,0\models\varphi$.
\end{defn}

This definition is similar to the one in \cite{GGG22} except three
differences: we describe the time-shifts using a different notation
rather their $\ocircle$, we work with $\omega$-words rather finite
traces, and we extend the definition beyond LTL into LTL+P.
\begin{rem}
Speaking about synthesis alongside model-theoretic types introduces
an unfortunate terminological clash. We can speak about a program
realizing a specification. But when we say that types are realized
in a model, or realized by a tuple of domain elements, we refer to
a different concept. The meaning of ''realize'' will always be clear
from the context. 
\end{rem}

\subsection{Three Variables}\label{subsec:3var}

To avoid cumbersome notation we assume that data subformulas contain
each only three free variables: $m,x,y$ (memory, current input, current
output). Bringing formulas to this form straightforward: first introduce
auxiliary output streams, each of them responsible to carry the values
from the bounded past (this can easily be expressed in $\mathsf{ocLTL}\left(\mathcal{M}\right)$).
Then pack everything into a product structure (to collapse multiple
input and output streams, including the auxiliary ones, into single
input and output streams) and obtain the desired form. It is easy
to generalize all derivations in this paper to the general case, one
that involves neither the reduction into three variables nor moving
to the product structure, but the notation becomes more complex.

\section{Realizability}\label{sec:Realizability}

In this section we shall reduce $\mathsf{ocLTL}\left(\mathcal{M}\right)$
into an equirealizable $\mathsf{LTL}$ formula. As concluded in \subsecref{3var},
we assume that all data subformulas have $m,x,y$ only as free variables,
so in particular $s=\ell=1$. By \thmref{rn}, we can now write each
data subformula as a disjunction of types from $T_{3}$ (represented
by the propositional variables $Q$'s below). After doing so, we replace
each type with a propositional variable. The inputs in the translated
$\mathsf{LTL}$ formula will encode types from $T_{2}$ (represented
by the propositional variables $P$'s below) representing the current
\emph{partial state} $\left(m,x\right)$ which is the previous output
(the memory) and the current input. This gives rise to an assume-guarantee
structure: we assume that the inputs are well-formed in this fashion,
e.g. by an encoder sitting between the program and the environment.
The outputs will encode the \emph{full state} $\left(m,x,y\right)$
which appends the current output $y$ to the partial state. We will
then encode in $\mathsf{LTL}$ how those variables interact: each
tuple should realize exactly one type, the partial state type should
conform to the type before it (the assume-guarantee structure), and
the full state type should conform to the partial state. This all
is formalized as follows: 
\begin{enumerate}
\item \label{enu:sub}Replace each data subformula $\delta$ with $\bigvee_{\tau\in\iota\left(\delta\right)}Q_{\tau}$
and obtain $\varphi^{*}$ (cf. \remref{iota}).
\item Express the uniqueness constraints: 
\begin{align*}
\Phi_{I}\coloneqq & \bigvee_{\sigma\in T_{2}}P_{\sigma}\wedge\bigwedge_{\sigma_{1}\neq\sigma_{2}}\neg\left(P_{\sigma_{1}}\wedge P_{\sigma_{2}}\right)\\
\Phi_{O}\coloneqq & \bigvee_{\tau\in T_{3}}Q_{\tau}\wedge\bigwedge_{\tau_{1}\neq\tau_{2}}\neg\left(Q_{\tau_{1}}\wedge Q_{\tau_{2}}\right)
\end{align*}
since each tuple realizes exactly one type. 
\item Express that the type of the full state $\left(m,x,y\right)$ and
the type of the partial state $\left(m,x\right)$ agree on $m,x$
(cf. \figref{fig1}): 
\[
\Psi_{O}\coloneqq\bigwedge_{\tau\in T_{3}}Q_{\tau}\to P_{\tau|_{m,x}}
\]
\item Similarly, express their agreement on the current $y$ which becomes
the next-state $m$:
\[
\Psi_{I}\coloneqq\bigwedge_{\tau\in T_{3}}Q_{\tau}\to\mathsf{X}\bigvee_{\sigma\in S_{\tau}}P_{\sigma}
\]
where 
\[
S_{\tau}=\left\{ \sigma:\sigma|_{m}=\mathsf{sh}\left(\tau\right)|_{m}\right\} 
\]
and $\mathsf{sh}\left(\tau\right)$ takes $\tau\left(m,x,y\right)$
and ''shifts'' it by renaming its variables into $\tau\left(u,v,m\right)$.
The variables $u,v$ are unused because we immediately restrict to
$m$ (cf. \figref{fig1}).
\item The translated specification is: 
\[
\hat{\varphi}\coloneqq\mathsf{G}\left(\Psi_{I}\wedge\Phi_{I}\right)\to\left(\varphi^{*}\wedge\mathsf{G}\left(\Psi_{O}\wedge\Phi_{O}\right)\right).
\]
\end{enumerate}
Declare the $P$'s as inputs and the $Q$'s as outputs. The result
has $\left|T_{2}\right|+\left|T_{3}\right|$ propositional variables.

\begin{figure}[tbh]
\centering \begin{tikzpicture}[>=stealth]

    \tikzset{
        varblock/.style={draw, thick, minimum height=0.8cm, minimum width=1.4cm, text centered, font={\sffamily}},
        mblock/.style={varblock, fill=blue!10},
        xblock/.style={varblock, fill=green!10},
        yblock/.style={varblock, fill=red!10},
        domainbox/.style={draw=purple!60, dashed, thick, rounded corners}
    }

    \node[font={\bfseries\Large}, anchor=east] at (-2.0, 5.0) {Time $t$};
    \node[font={\bfseries\Large}, anchor=east] at (3.0, 0.0) {Time $t+1$};

    \node[mblock] (m_env_t) at (0, 6.0) {$m_t$};
    \node[xblock] (x_env_t) at (2.5, 6.0) {$x_t$};
    
    \draw[domainbox] (-0.9, 5.4) rectangle (3.4, 6.6);
    \node[font={\large\bfseries}, text=purple!80!black, anchor=east] at (-1.1, 6.0) {$P_{\sigma}$};

    \node[mblock] (m_sys_t) at (0, 4.0) {$m_t$};
    \node[xblock] (x_sys_t) at (2.5, 4.0) {$x_t$};
    \node[yblock] (y_sys_t) at (5.0, 4.0) {$y_t$};
   
    \draw[domainbox] (-0.9, 3.4) rectangle (5.9, 4.6);
    \node[font={\large\bfseries}, text=purple!80!black, anchor=east] at (-1.1, 4.0) {$Q_{\tau}$};

    \draw[<->, thick, dashed, darkgray] (0, 5.6) -- node[fill=white, font={\small\bfseries}, inner sep=2pt] {$\Psi_O$} (0, 4.4);
    \draw[<->, thick, dashed, darkgray] (2.5, 5.6) -- node[fill=white, font={\small\bfseries}, inner sep=2pt] {$\Psi_O$} (2.5, 4.4);

    \node[mblock] (m_env_next) at (5.0, 1.0) {$m_{t+1}$};
    \node[xblock] (x_env_next) at (7.5, 1.0) {$x_{t+1}$};
    
    \draw[domainbox] (4.1, 0.4) rectangle (8.4, 1.6);
    \node[font={\large\bfseries}, text=purple!80!black, anchor=east] at (3.9, 1.0) {$P_{\sigma'}$};

    \node[mblock] (m_sys_next) at (5.0, -1.0) {$m_{t+1}$};
    \node[xblock] (x_sys_next) at (7.5, -1.0) {$x_{t+1}$};
    \node[yblock] (y_sys_next) at (10.0, -1.0) {$y_{t+1}$};
    
    \draw[domainbox] (4.1, -1.6) rectangle (10.9, -0.4);
    \node[font={\large\bfseries}, text=purple!80!black, anchor=east] at (3.9, -1.0) {$Q_{\tau'}$};

    \draw[<->, thick, dashed, darkgray] (5.0, 0.6) -- node[fill=white, font={\small\bfseries}, inner sep=2pt] {$\Psi_O$} (5.0, -0.4);
    \draw[<->, thick, dashed, darkgray] (7.5, 0.6) -- node[fill=white, font={\small\bfseries}, inner sep=2pt] {$\Psi_O$} (7.5, -0.4);

    \draw[thick, darkgray, rounded corners] (0, 3.6) -- (0, 2.5) -- (5.0, 2.5);
    \draw[thick, darkgray] (2.5, 3.6) -- (2.5, 2.5);
    \draw[thick, darkgray] (5.0, 3.6) -- (5.0, 2.5);
    
    \draw[->, thick, darkgray] (5.0, 2.5) -- (5.0, 1.4);
    
    \node[fill=white, font={\small\bfseries}, inner sep=2pt] at (2.5, 2.5) {$\Psi_I$};
    
    \node[font={\Large}] at (9.0, 1.0) {$\dots$};
    \node[font={\Large}] at (11.5, -1.0) {$\dots$};
\end{tikzpicture} \caption{\textbf{Type-compatibility constraints 3,4}. At time $t$, the environment
declares a $2$-type $P_{\sigma}=\mathrm{tp}\left(m_{t},x_{t}\right)$
and the system responds with a $3$-type $Q_{\tau}=\mathrm{tp}\left(m_{t},x_{t},y_{t}\right)$.
The constraint $\Psi_{O}$ requires that $\tau$ restricts to $\sigma$
on $\left(m,x\right)$: the system's output type must extend the environment's.
The constraint $\Psi_{I}$ links consecutive steps: since $y_{t}$
becomes the memory $m_{t+1}$, the $1$-type $\mathsf{sh}\left(\tau\right)|_{m}$
determined by the current output must equal the memory component $\sigma'|_{m}$
of any input type $P_{\sigma'}$ the environment inputs at time $t+1$.}
\label{fig:fig1} 
\end{figure}

\begin{thm}[Equirealizability]
\label{thm:main}$\varphi$ is realizable over $\mathcal{M}$ iff
$\hat{\varphi}$ is realizable. 
\end{thm}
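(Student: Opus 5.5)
We prove the two directions separately by translating strategies, with the type map $\mathrm{tp}$ as the bridge. Throughout we work in the three-variable normal form of \subsecref{3var}. There $m_t=y_{t-1}$, the semantics of data formulas at $t=0$ is $\bot$ since $\ell=1$, and we fix an arbitrary $m_0$. The basic invariant is the following. Along any trace of $\mathcal{M}$, set $P_{\sigma}$ true at $t$ iff $\sigma=\mathrm{tp}(m_t,x_t)$, and $Q_\tau$ true iff $\tau=\mathrm{tp}(m_t,x_t,y_t)$. By \remref{iota}, a data subformula $\delta$ holds at $t$ iff $\bigvee_{\tau\in\iota(\delta)}Q_\tau$ holds at $t$. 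A routine induction on the structure of $\varphi$ then gives $\pi,t\models\varphi$ iff the abstracted propositional trace satisfies $\varphi^*$ at $t$. The temporal operators are handled pointwise, so the induction is immediate.

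\emph{($\Rightarrow$)} Let $F$ realize $\varphi$ over $\mathcal{M}$. We build a propositional strategy for $\hat\varphi$ that keeps a \emph{concrete shadow} history $(x_0,y_0,\dots)$. At step $t$ the environment supplies $P_\sigma$. If the assumption $\Phi_I\wedge\Psi_I$ has been violated so far, the implication is trivially satisfied and we output anything legal. Otherwise $\sigma$ is unique, and by $\Psi_I$ we have $\sigma|_m=\mathrm{tp}(y_{t-1})=\mathrm{tp}(m_t)$. Extension (\propref{ext} with $k=1$) then yields a concrete $x_t$ with $\mathrm{tp}(m_t,x_t)=\sigma$; we compute it using effective presentation. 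We feed $x_t$ to $F$, obtaining $y_t$, and output $Q_\tau$ for $\tau=\mathrm{tp}(m_t,x_t,y_t)$. This choice satisfies $\Phi_O$ and $\Psi_O$ by construction. It is also causal, since $x_t$ depends only on the past $P$'s. The shadow trace is an $F$-trace, so it satisfies $\varphi$, and the invariant gives $\varphi^*$. One subtlety is $t=0$: the initial memory type must be fixed consistently with the environment's $P$ at $0$. We handle it either by choosing $m_0$ via extension from $\sigma|_m$, which works because every 1-type is realized, or by noting that $\Psi_I$ imposes nothing at $0$.

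\emph{($\Leftarrow$)} Let $G$ realize $\hat\varphi$. The concrete program maintains the actual history. At step $t$ it receives $x_t$ and computes $\sigma=\mathrm{tp}(m_t,x_t)$. It feeds $P_\sigma$ to $G$ and receives some $Q_\tau$. The fed input always satisfies $\Phi_I$, and it satisfies $\Psi_I$ because $\mathrm{tp}(m_t)=\mathrm{tp}(y_{t-1})$ is the $m$-part of $\mathsf{sh}$ of the previous genuine type. By induction the type of $(m_{t-1},x_{t-1},y_{t-1})$ equals the previously output $\tau$, so the assumption of $\hat\varphi$ holds and $G$ must meet $\Psi_O\wedge\Phi_O\wedge\varphi^*$. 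By $\Psi_O$ we have $\tau|_{m,x}=\sigma$, so \propref{ext} (with $k=2$) gives a computable $y_t$ realizing $\tau$ together with $(m_t,x_t)$. We output it, which maintains the invariant that the $Q$'s are exactly the true types. The invariant then transfers $\varphi^*$ back to $\varphi$.

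The main obstacle is making the invariant hold \emph{exactly}, not merely consistently. The induction across time steps must show that the concrete memory's type coincides with $\mathsf{sh}(\tau)|_m$ of the abstract output. We also need a careful treatment of the boundary $t<\ell$ and of the initial memory, where the normal-form reduction of \subsecref{3var} must be checked not to break equivalence. The extension steps themselves are routine given \propref{ext}.
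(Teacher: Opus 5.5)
Your two strategy translations follow the paper's proof: abstract the concrete trace by types via $\iota$, and build concrete witnesses with the extension proposition. In places you are more careful than the paper. You dispose of plays on which the assumption has already failed. At $t=0$ you choose $(m_0,x_0)$ to realize the environment's $\sigma_0$, which the paper's ``let $m_0$ be arbitrary'' does not secure. However, the step in ($\Leftarrow$) ``the fed input satisfies $\Psi_I$ \dots\ so the assumption of $\hat\varphi$ holds and $G$ must meet $\Psi_O\wedge\Phi_O\wedge\varphi^*$'' is circular, and the paper makes the same leap. $G$ owes the guarantee only on plays where $\mathsf{G}(\Psi_I\wedge\Phi_I)$ holds in full, and $\Psi_I$ constrains the environment through $G$'s own outputs. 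Your $P_{\sigma_t}$ satisfies $\Psi_I$ at $t-1$ only if $G$ asserted a single $Q$ at $t-1$, which is part of what $G$ owes. $G$ can exploit this when $|T_1|\ge 2$: at time $0$ it asserts $Q_{\tau_1}$ and $Q_{\tau_2}$ whose $y$-components have different $1$-types. Then $S_{\tau_1}\cap S_{\tau_2}=\emptyset$, so $\Psi_I$ at $0$ forces two distinct $P$'s at time $1$, contradicting $\Phi_I$. The antecedent of $\hat\varphi$ is then false on every play, so this $G$ realizes $\hat\varphi$ for \emph{every} $\varphi$, including the unrealizable $\delta\wedge\neg\delta$ over the atomless Boolean algebra. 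So ($\Leftarrow$) cannot be proved for $\hat\varphi$ as defined. One repair is to replace $\Psi_I$ by $\Phi_O\to\Psi_I$. The environment can then always comply, whatever $G$ outputs. A least-violation argument (extend the offending prefix with compliant inputs) then shows $G$ never violates $\Psi_O\wedge\Phi_O$, and your induction becomes sound.

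The $t=0$ boundary you flag at the end is also a real failure, not a routine check. You note that data formulas are $\bot$ at $t=0$, yet your invariant asserts $\delta\leftrightarrow\bigvee_{\tau\in\iota(\delta)}Q_\tau$ at every $t$. At $t=0$ the right-hand side is decided by $\mathrm{tp}(m_0,x_0,y_0)$, so the two sides can disagree. A concrete failure occurs over $(\mathbb{Q},<)$. There $|T_1|=1$, so every $S_\tau=T_2$ and the environment can always satisfy $\mathsf{G}(\Psi_I\wedge\Phi_I)$. Take $\varphi\coloneqq\neg(y=y)$: every program realizes it, because $y=y$ is $\bot$ at $t=0$. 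But $\hat\varphi$ is unrealizable, since $\varphi^*=\neg\bigvee_{\tau\in T_3}Q_\tau$ and $\Phi_O$ contradict each other at time $0$. This breaks ($\Rightarrow$). Translating each $\delta$ as $\mathsf{Y}\top\wedge\bigvee_{\tau\in\iota(\delta)}Q_\tau$ makes your invariant hold at every $t$, including $0$. With both repairs, your argument goes through essentially verbatim.
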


\begin{proof}
Immediate from the construction: each direction uses proposition \propref{ext}
and the effective presentation assumption to find concrete witnesses
matching a prescribed type, and \thmref{rn} to transfer truth values
between the concrete and propositional traces. We nevertheless present
a formal proof:

Assume $F$ realizes $\varphi$. We construct a propositional strategy
$G$ realizing $\hat{\varphi}$. $G$ only needs to satisfy $\Psi_{O}\wedge\Phi_{O}\wedge\varphi^{*}$
whenever the environment satisfies the $\Psi_{I}\wedge\Phi_{I}$.
At $t=0$ let $m_{0}\in M$ be arbitrary. For $t>0$ the environment
provides a valuation for the $P$ variables. By $\Phi_{I}$, exactly
one $P_{\sigma}$ holds. $\Psi_{I}$ guarantees that $\sigma_{t}|_{m}=\mathsf{sh}\left(\tau_{t-1}\right)|_{m}$.
Since $\tau_{t-1}$ was realized by $m_{t-1},x_{t-1},y_{t-1}$, setting
$m_{t}=y_{t-1}$ realizes the $m$-component of $\sigma_{t}$. By
\propref{ext}, there exists a concrete $x_{t}\in M$ such that $\mathrm{tp}\left(m_{t},x_{t}\right)=\sigma_{t}$.
We feed the history of inputs up to $x_{t}$ into $F$ which causally
produces an output $y_{t}$. We then compute the complete type $\tau_{t}=\mathrm{tp}\left(m_{t},x_{t},y_{t}\right)$.
$G$ then outputs $Q_{\tau}=\top$ and $Q_{\tau'}=\bot$ for all $\tau'\neq\tau_{t}$,
so $\Phi_{O}$ holds. $\Psi_{O}$ holds because $\tau_{t}|_{m,x}=\mathrm{tp}\left(m_{t},x_{t}\right)=\sigma_{t}$.
Since the concrete trace generated by $F$ satisfies $\varphi$, and
$\varphi^{*}$ is derived by replacing every data formula $\delta$
with $\bigvee_{\tau\in\iota\left(\delta\right)}Q_{\tau}$, \thmref{rn}
guarantees that the sequence of types $\tau_{t}$ satisfies $\varphi^{*}$.

For the other direction, assume $G$ realizes $\hat{\varphi}$. We
construct a program $F$ over $\mathcal{M}$ realizing $\varphi$.
At $t=0$, we have an input $x_{0}$ and we pick an arbitrary $m_{0}\in M$.
For $t>0$, given the memory $m_{t}$ and the new input $x_{t}$,
we compute $\sigma_{t}=\mathrm{tp}\left(m_{t},x_{t}\right)$. We feed
the environment valuation $P_{\sigma_{t}}=\top$ (and all others $\bot$)
to $G$. $\Phi_{I}$ is satisfied by construction. We set $m_{t}=y_{t-1}$
as expected and by that we also ensure that memory part of $\sigma_{t}$
matches $\mathsf{sh}\left(\tau_{t-1}\right)|_{m}$, satisfying $\Psi_{I}$.
By assumption, $G$ now outputs a valuation satisfying $\Psi_{O}\wedge\Phi_{O}\wedge\varphi^{*}$.
By $\Phi_{O}$, exactly one variable $Q_{\tau_{t}}$ is true. By $\Psi_{O}$,
$\tau_{t}|_{m,x}=\sigma_{t}$, so $\tau_{t}$ restricts to $\mathrm{tp}\left(m_{t},x_{t}\right)$.
\propref{ext} guarantees a $y_{t}\in M$ such that $\mathrm{tp}\left(m_{t},x_{t},y_{t}\right)=\tau_{t}$.
Using the effective presentation assumption, $F$ computes such a
$y_{t}$ and outputs it, setting $m_{t+1}=y_{t}$ for the next step.
This creates a concrete trace over $\mathcal{M}$ where each tuple
$\left(m_{t},x_{t},y_{t}\right)$ realizes the type $\tau_{t}$ chosen
by $G$. Since $G$ ensures that the sequence $Q_{\tau}$ satisfies
$\varphi^{*}$, \thmref{rn} implies that the concrete trace satisfies
the original data formulas, so $F$ realizes $\varphi$. 
\end{proof}

$\mathcal{M}$ is fixed, hence the blowup is a function of the number
of types and is therefore a constant, but this is only after the reduction
into the $3$-variable form, so $\mathcal{M}$ is a product structure
whose dimension determined by $s,\ell$. We obtain: 
\begin{cor}
$\mathrm{LTL}\left(\mathcal{M}\right)$ realizability is $\text{2-EXPTIME}$
complete for fixed $\mathcal{M},s,\ell$. Similarly satisfiability
is complete for PSPACE.
\end{cor}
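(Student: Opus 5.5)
The corollary splits into an upper bound and a matching lower bound, each for realizability and for satisfiability.

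\textbf{Upper bound for realizability.} We compose the reduction of \thmref{main} with the classical 2-EXPTIME algorithm for propositional LTL+P realizability. Past operators can be handled directly by translating to deterministic parity automata of doubly exponential size, or eliminated with a known exponential-succinctness-preserving construction. We then measure the size of the input.

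First, the normalization of \subsecref{3var} adds auxiliary output streams carrying the bounded past. This gives $O(s\ell)$ extra streams and $O(s\ell)$ extra conjuncts of the form $\mathsf{G}(\text{copy constraint})$, and packs everything into the product structure $\mathcal{M}^{n}$, where $n$ depends only on $s,\ell$. By \propref{prod}, $\mathcal{M}^{n}$ is again $\omega$-categorical, so by \thmref{rn} the numbers $|T_{2}|,|T_{3}|$ of its types are constants $c(\mathcal{M},s,\ell)$.

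Second, step 1 of the construction replaces each data subformula by a disjunction of at most $|T_{3}|$ propositions. The sets $\iota(\delta)$ are computable by effective presentation: decide, for each of the finitely many types, whether $\delta$ belongs to it. Hence $|\varphi^{*}|\le c\cdot|\varphi|$. The added conjuncts $\Phi_{I},\Phi_{O},\Psi_{I},\Psi_{O}$ have size polynomial in $c$, and the number of propositions is $|T_{2}|+|T_{3}|$, also constant. Thus $|\hat\varphi|=O(|\varphi|)$ for fixed $\mathcal{M},s,\ell$, and 2-EXPTIME membership follows.

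\textbf{Lower bound for realizability.} We reduce propositional LTL realizability, which is 2-EXPTIME-hard, to our problem. As long as $M$ has at least two elements, fix two distinct elements definable up to type, for instance via the 2-types of $x$ and $y$ with respect to $m$. Alternatively, we can encode each Boolean input $p$ by the atomic data formula $x=m$ and use extra streams, with $s$ fixed but large enough. The cleaner route is to allow a fixed number of streams and encode each propositional variable by an equality formula between the stream and a designated constant stream. The main obstacle is exactly this: the statement fixes $s$, so we must encode arbitrarily many propositions with a bounded number of streams.

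I would try first to encode truth values into types of a single stream over time. Equality with a designated remembered element via lookback gives one bit per time step and per stream. One can then serialize $k$ bits over $k$ time steps using a standard LTL block encoding, which preserves 2-EXPTIME-hardness because the blocks are only polynomially long. If this is too delicate, I would note that hardness already holds for LTL with a fixed number of variables, and hence for a bounded $s$.

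\textbf{Satisfiability.} The same translation works with both $P$ and $Q$ treated as free propositions. Satisfiability of $\hat\varphi$ conjoined with $\mathsf{G}(\Psi_{I}\wedge\Phi_{I})$ corresponds to a concrete trace, built step by step by \propref{ext} exactly as in the second direction of the proof of \thmref{main}. This places the problem in PSPACE by LTL+P satisfiability. PSPACE-hardness follows from the same encoding of propositional LTL as in the lower bound above.
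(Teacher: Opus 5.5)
Your membership argument is exactly the paper's justification: normalize to the three-variable form, note via Proposition~\ref{prop:prod} and Theorem~\ref{thm:rn} that $|T_2|,|T_3|$ are constants, conclude $|\hat\varphi|=O(|\varphi|)$, and invoke LTL+P realizability (resp.\ satisfiability). The paper says nothing more than this constant-blowup remark and gives no hardness argument.

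The genuine gap is your realizability lower bound. The block serialization is only named, never carried out. The fallback ``hardness holds for LTL with a fixed number of variables, hence for bounded $s$'' is unsupported, and it does not suffice: fixed $s,\ell$ can leave the environment with no bits at all. Take $s=1$, $\ell=0$ and $\mathcal{M}$ with a single $1$-type, e.g.\ the pure infinite set or $(\mathbb{Q},<)$:
\begin{itemize}
\item for every $x_t$ and every $\tau\in T_2$, Proposition~\ref{prop:ext} yields $y_t$ with $\mathrm{tp}(x_t,y_t)=\tau$;
\item each data formula at time $t$ depends only on $\mathrm{tp}(x_t,y_t)$.
\end{itemize}
So $\varphi$ is realizable iff some word over $T_2$ satisfies its propositional abstraction. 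That problem is in PSPACE, since these theories are in PSPACE, and hence it is not 2-EXPTIME-hard by the time hierarchy theorem. The hardness half of the statement therefore needs a non-degeneracy hypothesis such as $\ell\ge1$ or $s\ge2$.

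Under $\ell\ge1$ you get an environment bit $x_0=x_{-1}$ and a system bit $y_0=x_0$ per step, so no designated element is needed. You must still carry out the serialization:
\begin{itemize}
\item in each block, put all input bits of an original step before its output bits, so the system commits its outputs only after seeing the whole input letter and never a future one;
\item make the other player's bit at each position irrelevant;
\item supply a block-start marker, since LTL cannot count modulo the block length, e.g.\ a pattern on the system's track enforced as a guarantee.
\end{itemize}

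For satisfiability you need no game encoding. The system bit $y_0=x_0$, together with PSPACE-hardness of LTL satisfiability with a single proposition (Demri--Schnoebelen), gives hardness for every infinite $\mathcal{M}$, $s\ge1$ and $\ell$, after prefixing $\mathsf{X}^{\ell}$.

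Finally, there is a caveat you share with the paper. Effective presentation makes $\iota(\delta)$ computable but gives no complexity bound, while the data formulas are arbitrary first-order formulas that are part of the input. Since $\mathsf{G}\delta$ for a sentence $\delta$ is realizable iff $\mathcal{M}\models\delta$, both upper bounds tacitly require $\mathrm{Th}(\mathcal{M})$ to be decidable within 2-EXPTIME (resp.\ PSPACE). This holds for $(\mathbb{Q},<)$ or the pure set, but not for every effectively presented $\omega$-categorical $\mathcal{M}$.
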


Synthesis is now straightforward by construction: given a program
that realizes the propositional LTL formula, we reproduce the input
encoding and the output decoding according to the types mapped to
the propositional variables.

\section{Optimizations}

\subsection{Binary Encoding}

One trivial optimization is to encode each $2$-type using $\left\lceil \log_{2}\left|T_{2}\right|\right\rceil $
propositional variables (the type ''number'' encoded in binary)
and similarly for $3$-types. Instead of $\left|T_{2}\right|+\left|T_{3}\right|$
propositional variables we obtain only $\left\lceil \log_{2}\left|T_{2}\right|\right\rceil +\left\lceil \log_{2}\left|T_{3}\right|\right\rceil $
many, and further we do not need to encode that at exactly one type
holds at each point in time as this is automatic from this encoding
for both existence and uniqueness.

\subsection{Avoiding $T_{3}$}

Avoiding enumerating $T_{3}$ in the worst case is not possible: we
have formulas with $3$ free variables and we have to keep track of
which hold or do not hold at each step. In the worst case, the Boolean
algebra generated by those formulas is precisely the one generated
by $T_{3}$. However, if the generated algebra is coarser, we can
take advantage of that.

In what follows let $\Delta$ be the set of data subformulas in $\varphi.$
For $A\in\left\{ \bot,\top\right\} ^{\left|\Delta\right|}$, define
the $\Delta$-minterm $\Delta^{A}$ by 
\[
\Delta^{A}\coloneqq\bigwedge_{\delta|a_{\delta}=\top}\delta_{i}\left(m,x,y\right)\wedge\bigwedge_{\delta|a_{\delta}=\bot}\neg\delta_{i}\left(m,x,y\right)
\]
Observe the structure of $\hat{\varphi}$ from \secref{Realizability}:
the output propositions $Q_{\tau}$ appear in three places. In $\varphi^{*}$,
each data subformula $\delta$ was replaced by $\bigvee_{\tau\in\iota\left(\delta\right)}Q_{\tau}$,
which simply encodes whether $\delta$ holds. In $\Psi_{I}$, the
output type determines the successor memory type $\mathsf{sh}\left(\tau\right)|_{m}\in T_{1}$,
which constrains which input types are admissible at the next step.
In $\Psi_{O}$ and $\Phi_{O}$, the output type must be unique and
extend the input type.

So $\hat{\varphi}$ does not need the full type $\tau$. It needs
only two things: which of the $\left|\Delta\right|$ data subformulas
hold, and what the successor memory type is. Replace the $\left|T_{3}\right|$
output propositions $Q_{\tau}$ with $\left|\Delta\right|+\left|T_{1}\right|$
propositions: $D_{i}$ for each $\delta_{i}\in\Delta$, and $R_{\rho}$
for each $\rho\in T_{1}$ encoding the successor's memory type. In
$\varphi^{*}$, replace $\bigvee_{\tau\in\iota\left(\delta_{i}\right)}Q_{\tau}$
with simply $D_{i}$. In $\Psi_{I}$, replace $\mathsf{sh}\left(\tau\right)|_{m}$
with $\rho$. Replace $\Phi_{O}$ with uniqueness for $R_{\rho}$:
\[
\Phi^{R}_{O}\coloneqq\bigvee_{\rho\in T_{1}}R_{\rho}\wedge\bigwedge_{\rho_{1}\neq\rho_{2}}\neg\left(R_{\rho_{1}}\wedge R_{\rho_{2}}\right)
\]
or skip it if using a binary encoding.

So far, not every combination $\left(\rho,\sigma,D_{1},\dots,D_{\left|\Delta\right|}\right)$
is witnessed by an actual type. The output could declare $D_{1}\wedge D_{2}$
even when $\delta_{1}\wedge\delta_{2}$ is unsatisfiable. We therefore
add the constraint: 
\[
\Phi_{\delta}\coloneqq\bigwedge_{\sigma,\rho,A\in\left\{ \bot,\top\right\} ^{\left|\Delta\right|}}\left(P_{\sigma}\wedge R_{\rho}\wedge{\cal D}^{A}\right)\to\underbrace{\exists mxy.\sigma\left(m,x\right)\wedge\rho\left(y\right)\wedge\Delta^{A}}_{\text{precomputed}}
\]
where ${\cal D}^{A}$ is minterm in the $D$'s in the same fashion
as $\Delta^{A}$. Each infeasible combination contributes one forbidden-pattern
clause. The guard $\Phi_{\delta}$ subsumes $\Psi_{O}$: any feasible
$\left(\sigma,\rho,a_{1},\dots,a_{K}\right)$ is witnessed by a $\tau$
with $\tau|_{m,x}=\sigma$, so the requirement that the output type
extends the input type is already enforced by feasibility. The translated
specification now becomes 
\[
\mathsf{G}\left(\Psi_{I}\wedge\Phi_{I}\right)\to\left(\varphi^{*}\wedge\mathsf{G}\left(\Phi^{R}_{O}\wedge\Phi_{\delta}\right)\right).
\]

The total number of propositional variables is now $\left|T_{1}\right|+\left|T_{2}\right|+\left|\Delta\right|$.
Combined with the binary encoding optimization, we can reduce it into
$\left\lceil \log_{2}\left|T_{1}\right|\right\rceil +\left\lceil \log_{2}\left|T_{2}\right|\right\rceil +\left|\Delta\right|$.

\section{Extensions to the Data Language}

We show two ways in which we can extend the expressiveness of given
$\omega$-categorical theories: constant symbols, and fixpoint operators.

The first case is quite trivial: adding any finite number $c$ of
interpreted and uninterpreted constants to the language maintains
$\omega$-categoricity, since the $k$-types in the expanded structure
correspond to the $\left(k+c\right)$-types in the original structure,
and the claim follows immediately from \thmref{rn}. For a similar
result cf. \cite{Hodges93} Theorem 7.3.1, p. 341.

Next we that the language can be extended with fixed-point operators.
Unlike adding constants, this does not add expressivity but adds succinctness.
PFP (partial fixed-point) operators were studied in \cite{AV89} for
finite structures and were enhanced to infinite structures in \cite{Kreutzer02}.
For the sake of generality we argue for PFP and the results for LFP
and GFP follow. Denote by $\mathcal{L}\left(\mathcal{M}\right)$ the
first-order language of $\mathcal{M}$.
\begin{defn}
\label{def:pfp}$\mathcal{L}_{\text{PFP}}\left(\mathcal{M}\right)$
is the language $\mathcal{L}\left(\mathcal{M}\right)$ extended with
the \emph{partial fixed-point operator}. Specifically, if $\phi$
is a formula with a free relation symbol $R$ and free variable symbols
$X$, and $Y$ is another tuple of variable symbols, then $\left(\text{pfp}_{R,X}\phi\right)\left(Y\right)$
is a formula. Given $\mathcal{U}$ an interpretation of $U$, define
$R^{0}\coloneqq\emptyset$ and $R^{n+1}\coloneqq\left\{ Y|\mathcal{M},\mathcal{U}\models\phi\left(R^{n},Y\right)\right\} $.
Then $\mathcal{M},\mathcal{U}\models\left(\text{pfp}_{R,X}\phi\right)\left(Y\right)$
iff the sequence $R^{n}$ has a fixed point and $Y$ is a tuple in
that fixed point.
\end{defn}

\begin{thm}
Over $\omega$-categorical structures, every formula in $\mathcal{L}_{\text{PFP}}\left(\mathcal{M}\right)$
can effectively be written as a logically equivalent formula in $\mathcal{L}\left(\mathcal{M}\right)$.
\end{thm}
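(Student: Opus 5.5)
We argue by structural induction on the formula, eliminating the innermost $\text{pfp}$ operator first. The induction hypothesis says every strict subformula is already (effectively) equivalent to a first-order formula. So it suffices to treat a single occurrence $\left(\text{pfp}_{R,X}\phi\right)\left(Y\right)$ where $\phi$ is first-order in $\Sigma\cup\{R\}$. Let $R$ have arity $r$ and let $\phi$ have parameters $\bar{z}$, i.e.\ free variables besides those bound by the operator, say $p$ of them.

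The key invariant is that every stage $R^{n}$, viewed as a relation in $\bar{x},\bar{z}$, is definable by a first-order $\Sigma$-formula $\psi_{n}\left(\bar{x},\bar{z}\right)$. We take $\psi_{0}\coloneqq\bot$ and $\psi_{n+1}\coloneqq\phi\left[R\mapsto\psi_{n}\right]$, obtained by substituting $\psi_{n}$ for each occurrence of $R$. By \remref{iota}, each $\psi_{n}$ is equivalent to a disjunction $\bigvee_{\tau\in\iota\left(\psi_{n}\right)}\tau$ with $\iota\left(\psi_{n}\right)\subseteq T_{r+p}$. Hence, up to equivalence in $\mathcal{M}$, there are at most $2^{\left|T_{r+p}\right|}$ candidate stages, and this number is finite by \thmref{rn}. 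Moreover $\psi_{n+1}$ depends only on the equivalence class of $\psi_{n}$, since substituting equivalent formulas yields equivalent formulas. So the sequence of classes $\iota\left(\psi_{n}\right)$ is generated by a deterministic map on a finite set. It is therefore eventually periodic, with preperiod plus period at most $N\coloneqq2^{\left|T_{r+p}\right|}$.

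Uniformity in the parameters comes for free: since $\bar{z}$ is part of the type, the sequence of classes is a single sequence, not one per parameter value. For a fixed value of $\bar{z}$, however, the slice sequence $R^{n}_{\bar{z}}$ might reach a fixed point while the global sequence only cycles. To handle this, for each $\bar{z}$-type $\rho\in T_{p}$ we consider the restricted sequence. The slice at $\bar{z}$ depends only on $\mathrm{tp}\left(\bar{z}\right)$ through the types extending it, so it too is eventually periodic within the first $N$ steps. Let $\mathrm{Fix}_{\rho}$ hold iff there is some $n\le N$ at which $\psi_{n}$ and $\psi_{n+1}$ agree on all types extending $\rho$; in that case let $n_{\rho}$ be the least such $n$. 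Then
\[
\left(\text{pfp}_{R,X}\phi\right)\left(Y\right)\equiv\bigvee_{\rho\in T_{p},\ \mathrm{Fix}_{\rho}}\left(\rho\left(\bar{z}\right)\wedge\psi_{n_{\rho}}\left(Y,\bar{z}\right)\right),
\]
where $\rho\left(\bar{z}\right)$ denotes an isolating formula, which exists by \thmref{rn}. This is a first-order formula. It is correct because once the slice stabilizes at $\psi_{n_{\rho}}$ it remains there forever. Conversely, if the slice has not stabilized within $N$ steps, the finite state space forces it into a genuine cycle of length $>1$, so no fixed point exists.

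The main obstacle is \emph{effectivity}. We need to compute $\iota\left(\psi\right)$ for a given first-order $\psi$, to decide equality of these sets, and to produce isolating formulas for types. The effective-presentation assumption only gives us the types of concrete tuples. I would first try to extend that assumption explicitly, assuming that $T_{k}$ is computable with isolating formulas and that $\mathrm{Th}\left(\mathcal{M}\right)$ is decidable. Under these assumptions each step above is a finite computation. Failing that, I would enumerate representative tuples for each type by searching, using computable $\mathrm{tp}$, until $\left|T_{k}\right|$ distinct types are found. This requires knowing $\left|T_{k}\right|$, which I would add to the presentation assumption. I would then evaluate $\psi$ on the representatives.
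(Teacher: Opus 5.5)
Your treatment of a single pfp with first-order body is correct and follows the paper's skeleton: stages $\psi_0=\bot$, $\psi_{n+1}=\phi[R\mapsto\psi_n]$, finitely many stages up to equivalence by Ryll-Nardzewski, and eventual periodicity. Where you differ is in how the fixed-point test is expressed.

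The paper uses the single formula $\chi_N(X)=\psi_N(X)\wedge\forall Y\,(\psi_N(Y)\leftrightarrow\psi_{N-1}(Y))$. The parameters remain free in the $\forall Y$ clause, so this test is automatically evaluated slice by slice. The per-parameter phenomenon you isolate, a slice stabilizing while the global sequence cycles, is therefore handled inside the object language, with no case split over $T_p$. This is correct once $N-1$ lies past the preperiod, e.g.\ for your $N=2^{|T_{r+p}|}$. The paper's stated condition on $N$ does not by itself guarantee this for the least admissible $N$.

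Your disjunction over $\rho\in T_p$ performs the same test at the meta level. That is valid, and it makes the parameter argument explicit where the paper leaves it implicit. But it is exactly what creates your effectivity obstacle. You must compute $\iota(\psi_n)$, decide $\mathrm{Fix}_\rho$, find $n_\rho$ and write isolating formulas, which needs a decidable theory and more. With $\chi_N$, the output is produced by purely syntactic substitution, so effectivity needs only a computable upper bound on $|T_{r+p}|$.

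One step of yours fails as stated: the innermost-first structural induction. Suppose an inner $\text{pfp}_{S,X'}\theta$ has a body mentioning the relation variable $R$ of an enclosing pfp. Then the innermost pfp is not a formula over $\mathcal{M}$ alone. Its stages are not $\Sigma$-formulas, so $\iota$ and the type count do not apply, and $(\mathcal{M},R)$ for arbitrary $R$ need not be $\omega$-categorical. In fact such a subformula need not be first-order in $\Sigma\cup\{R\}$ uniformly in $R$. A reachability-style iteration along an arbitrary binary $R$ over the pure set is a counterexample, by a standard compactness argument.

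The fix is to reverse the order: induct on the number of pfp operators and eliminate an outermost one. Each $\psi_n$ is a $\Sigma$-formula, so $\phi[R\mapsto\psi_n]$ has no free relation variables and fewer pfp operators. The induction hypothesis then turns it into a first-order $\psi_{n+1}$, and the rest of your argument goes through unchanged. The paper's proof tacitly assumes a pfp-free body and does not address nesting at all.
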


\begin{proof}
Seeing each $R^{n}$ as a relation, and observing that $R^{n+1}\coloneqq\left\{ Y|\mathcal{M},\mathcal{U}\models\phi\left(R^{n},Y\right)\right\} $
is the same as saying $\mathcal{M},R^{n},R^{n+1}\models R^{n+1}\left(Y\right)\leftrightarrow\phi\left(R^{n},Y\right)$,
set $\psi_{0}\coloneqq\bot$ and 
\[
\psi_{n+1}\left(X\right)\coloneqq\phi\left(\psi_{n}\left(X\right),X\right)
\]
Observing that all $\psi_{n}\left(X\right)$ belong to $\mathcal{L}\left(\Sigma,M_{1},U_{1}\right)$
where $M_{1}$ ($U_{1}$) are all {[}finitely many{]} interpreted
(uninterpreted) constants appearing in $\phi$, then by \thmref{rn}
exists $N$ s.t. $\forall n>N\exists k\leq N.\psi_{n}\equiv\psi_{k}$.
For $m=2,\dots,N$ define
\[
\chi_{m}\left(X\right)\coloneqq\psi_{m}\left(X\right)\wedge\forall Y.\psi_{m}\left(Y\right)\leftrightarrow\psi_{m-1}\left(Y\right)
\]
and now $\chi_{N}\equiv\text{pfp}_{R,X}\phi$ by construction.
\end{proof}

\begin{cor}
$\mathcal{L}\left(\mathcal{M}\right)$ can be extended with LFP and
GFP operators which in turn can be eliminated.
\end{cor}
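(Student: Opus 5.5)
The corollary follows from the PFP-elimination theorem just proved, once LFP and GFP are realized as instances of PFP. I would take the standard route: first show that least and greatest fixed points of positive (monotone) formulas are definable in $\mathcal{L}_{\text{PFP}}\left(\mathcal{M}\right)$, and then apply the theorem to eliminate the resulting PFP operators.

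\textbf{Step 1 (LFP as PFP).} Let $\phi\left(R,X\right)$ have $R$ occurring only positively, so that the operator $R\mapsto\left\{ Y\mid\mathcal{M}\models\phi\left(R,Y\right)\right\}$ is monotone. I would replace $\phi$ by the inflationary formula $\phi'\left(R,X\right)\coloneqq R\left(X\right)\vee\phi\left(R,X\right)$. Starting from $R^{0}=\emptyset$, the $\phi'$-iterates coincide with the $\phi$-iterates, since by monotonicity these increase along a chain. The key input is the proof of the previous theorem. It shows that every stage $\psi_{n}$ is a first-order formula over finitely many parameters, and that only finitely many of them are pairwise inequivalent. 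An increasing sequence taking finitely many values must stabilize after finitely many steps, so the chain reaches a fixed point at some finite stage $N$, rather than only at some transfinite ordinal. That stage is the least fixed point, by the usual Knaster--Tarski argument: every fixed point contains each finite stage, by induction. Hence $\text{lfp}_{R,X}\phi\equiv\text{pfp}_{R,X}\phi'$.

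\textbf{Step 2 (GFP by duality).} I would use $\text{gfp}_{R,X}\phi\equiv\neg\,\text{lfp}_{R,X}\neg\phi\left(\neg R/R\right)$, where $\neg R/R$ means substituting $\neg R$ for $R$. This substitution preserves positivity of $R$, so Step 1 applies to the right-hand side.

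\textbf{Step 3 (elimination).} Finally, I would eliminate nested operators from the innermost outwards. By the theorem, each innermost PFP formula becomes an equivalent first-order formula in $\mathcal{L}\left(\mathcal{M}\right)$ over finitely many parameters. Substituting it back keeps us inside $\mathcal{L}_{\text{PFP}}$ with one fewer operator, and induction on nesting depth finishes the argument. The elimination is effective because the stabilization index can be found by checking the sentences $\forall Y.\psi_{n}\left(Y\right)\leftrightarrow\psi_{n-1}\left(Y\right)$ in $\mathcal{M}$, which is decidable here: a sentence holds iff it belongs to the unique complete $0$-type of $\mathcal{M}$, and the effective-presentation assumption makes complete types computable.

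The main obstacle is Step 1, specifically the claim that the chain stabilizes at a finite stage. This rests on the finiteness-up-to-equivalence conclusion drawn from \thmref{rn} in the preceding proof. It is essential because it rules out transfinite stages, which PFP as defined here cannot reach.
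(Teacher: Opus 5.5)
Your argument is correct in substance. It differs from the paper's mainly in how GFP is handled.

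\textbf{How GFP and LFP are handled.} The paper declares LFP a special case of PFP. For GFP it re-runs the PFP-elimination proof with $\psi_{0}\coloneqq\top$, and remarks that in both cases one may output $\psi_{N}$ instead of $\chi_{N}$ because a fixed point is guaranteed. Your Step 1 makes explicit what that claim silently needs. PFP as defined only sees finite stages, so LFP is a special case only because the finitely-many-formulas argument forces the monotone chain to stabilize at a finite $N$. That is exactly the justification for the paper's $\psi_{N}$ shortcut. The inflationary wrapper $R\left(X\right)\vee\phi$ is harmless but unnecessary, as you observe yourself.

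For GFP you use the duality $\text{gfp}_{R,X}\phi\equiv\neg\,\text{lfp}_{R,X}\neg\phi\left(\neg R/R\right)$. This makes GFP an honest instance of LFP, hence of PFP, so the theorem can be applied as a black box rather than re-proved. Unwinding the duality, your stages are the complements of the paper's $\top$-started iteration, so the two routes yield equivalent formulas. Steps 1--2 use only monotonicity of the induced operator, and the dual of a monotone operator is monotone. So they also cover the semantically monotone formulas of the paper's subsequent remark, not only syntactically positive ones.

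\textbf{Step 3 needs tightening.} An innermost fixed-point subformula can contain, free, the relation variable $R$ of an enclosing operator; this is the typical situation for nested or alternating fixed points. The theorem then does not apply to that subformula in isolation. For arbitrary $R$ the expansion $\left(\mathcal{M},R\right)$ is not $\omega$-categorical. For instance, reachability along $R$ has no first-order equivalent uniform in $R$; by compactness this already fails over a pure infinite set.

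There are two fixes. You can invoke the theorem on the whole translated formula, as the paper does. Or you can eliminate from the outside in: each stage of the outer iteration substitutes a first-order formula for $R$, after which the inner operator has no free relation variables and can be eliminated.

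Similarly, if the body has free first-order parameters $\bar{u}$, the formulas $\forall Y.\psi_{n}\left(Y\right)\leftrightarrow\psi_{n-1}\left(Y\right)$ are not sentences. Your stabilization test should use the universal closure $\forall\bar{u}\forall Y.\psi_{n}\leftrightarrow\psi_{n-1}$, or fold the check into the output formula as the paper's $\chi_{N}$ does.
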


\begin{proof}
The LFP case is a special case of PFP. For GFP we use the same proof
under the modification $\psi_{0}\coloneqq\top$. For both LFP and
GFP, the construction can be more efficient by considering $\psi_{N}$
without the need to construct $\chi$, since a fixed point is guaranteed
to exist.
\end{proof}

\begin{rem}
Observe that, in contrast to many LFP/GFP logics, we do not require
$\phi$ to be syntactically monotone, e.g. that $R$ appears only
under an even number of negations. But in our setting semantic monotonicity
is decidable and is trivially expressible in the language itself.
\end{rem}

\section{Application}

We sketch an application related to Lindenbaum-Tarski algebra (LTA)
of logics in general. One major example of an $\omega$-categorical
structure is the one of atomless Boolean algebra (BA). The LTA of
virtually any classical logic $\mathcal{L}$ with infinite signature
is atomless BA. Therefore our setting allows to specify $\text{\ensuremath{\mathsf{ocLTL\left[LTA\left[\mathcal{L}\right]\right]}}}$programs
whose input and output objects are formulas, allowing Boolean operations
between them, as well as equality understood as logical equivalence.
Moreover, using the product structure construction, several logics
can be supported at once alongside additional $\omega$-categorical
structures.

This gives rise to semantic homoiconicity: inputs and outputs can
be formulas in this very language itself, so the language is $\text{\ensuremath{\mathsf{ocLTL\left[LTA\left[ocLTL\left[LTA\left[\dots\right]\right]\right]\right]}}}$,
bypassing paradoxes arising from self-reference, because the setting
is limited to the language of BA only. Strictly speaking, making the
signature infinite e.g. by adding infinitely many uninterpreted constants,
breaks $\omega$-categoricity. However this can be mitigated by recalling
that each formula is finite and then using the fact that finitely
many constants maintain $\omega$-categoricity as we showed in the
previous section, and using Boolean-algebraic considerations that
we do not mention here.

This can be taken a step forward towards a system that takes as inputs
new constraints about itself, judging whether or not to accept or
reject them using Boolean operations against other sentences in the
language (e.g. checking entailment against security conditions), and
whenever accepted, applies the update to itself. For example, the
system accepts commands from the user, each command is an LTL spec
itself, while the system is also implemented using LTL specs, those
commands may refer to other commands, e.g. a command can say ``if
any of my future commands contradicts this given condition, then do
not execute it''. This is a part of a blueprint towards Safe AI.
We do not elaborate on this setting here, but we have already implemented
such a system\footnote{http://github.com/idni/tau-lang} (at the time
of writing, the full LTL support in this system is still pending and
will be shipped over the next weeks).

\section{Related Work}

LTL modulo theories is an active area of research. We mention only
some of the works closest to our approach: 
\begin{enumerate}
\item RealMT: \cite{RS24} reduces LTL synthesis modulo $\exists^{*}\forall^{*}$-decidable
theories to propositional synthesis via propositional abstraction.
However, RealMT variables are \emph{fresh} at each step: the formula
$\mathsf{G}\left(y_{0}>y_{-1}\right)$ is inexpressible because it
compares outputs across time steps. For lookback $\ell=0$, RealMT
subsumes our framework. For $\ell>0$ our framework seems to be the
only result so far providing decidable synthesis with cross-step data
references over infinite domains. 
\item LTL modulo theories: \cite{GGG22} studies $\mathrm{LTL}_{f}$ modulo
theories over \emph{finite} traces, giving only semi-decidability
results for satisfiability when data constraints come from decidable
SMT theories. \cite{ADPS25} extends this to full LTL synthesis over
infinite-state arenas using a CEGAR loop: abstract via Boolean predicates,
synthesize, check spuriousness, and refine. \cite{CGGMT24} further
presents a bounded fragment admitting 2EXPTIME synthesis (EXPTIME
without constants).
\item Register automata and data words: synthesis over infinite alphabets
has been studied through register automata. \cite{EF21} considers
synthesis of register transducers over linearly ordered data domains
(e.g. discrete and dense orders), using a finite number of registers
to store data values. \cite{KK19} studies synthesis for register
automata specifications with universality constraints. They prove
both decidability and undecidability results, depending on register
bounds. In our framework, $r$ registers over $\left(\mathbb{Q},<\right)$
(which is an $\omega$-categorical structure) correspond to lookback
$\ell=r$ with $s=1$ stream. The type spaces grow with $\ell$ but
remain finite for each fixed $\ell$. The register automata approach
does not directly generalize to richer structures while our framework
handles any $\omega$-categorical $\mathcal{M}$. 
\item Data-word logics: Freeze LTL \cite{DL09} extends LTL with a freeze
quantifier that stores the current data value in a register. Over
infinite data words, satisfiability is already undecidable with a
single register and only the $\mathsf{X}$ and $\mathsf{F}$ operators.
Our bounded lookback $\ell$ avoids this: the number of simultaneously
accessible past values is fixed, preventing the encoding of counter
machines. This is a deliberate design choice that enables decidability. 
\item Orbit-Finiteness: \cite{BKL14} develops an automata theory parameterized
by a symmetry group $G$ acting on an infinite alphabet. An automaton
is \emph{orbit-finite} if its state set has finitely many $G$-orbits.
When $G=\mathrm{Aut}\left(\mathcal{M}\right)$ for an $\omega$-categorical
$\mathcal{M}$, their orbit-finite sets are exactly our type spaces
$T_{k}=M^{k}/\mathrm{Aut}\left(\mathcal{M}\right)$ - not an analogy
but an identity via the Ryll-Nardzewski theorem, even though \cite{BKL14}
does not mention $\omega$-categoricity. For the equality symmetry,
their orbit-finite nondeterministic automata coincide (Theorem 6.4
of \cite{BKL14}) with the finite-memory automata of \cite{KF94},
for which emptiness is decidable but universality is undecidable \cite{KF94}.
The main difference between their and our approaches is that they
study language-theoretic properties (emptiness, inclusion, determinization)
while we study synthesis. Our equirealizability theorem is specific
to the synthesis setting and seems to have no direct counterpart in
the orbit-finite automata literature. 
\end{enumerate}

\section{Conclusion}\label{sec:Conclusion}

We have shown that $\omega$-categoricity gives decidable LTL+P realizability
and synthesis, while allowing data variables to interact over bounded
lookback, and while showing that from a theoretical point of view,
when compared to propositional LTL, only the lookback increases the
complexity. The types are the only $\mathcal{M}$-dependent data.
Once computed, the main realizability and synthesis algorithms are
purely propositional, with type enumeration being the main blowup.

The main beneficiaries of $\omega$-categoricity are fragments of
LTL for which realizability and synthesis can be translated directly
into fixpoint operators, in which case no type enumeration nor $\Delta$-minterms
are needed. A direct corollary of \thmref{rn} is that in $\omega$-categorical
structures, fixpoint operators can be eliminated. Therefore in such
fragments of LTL, there is no need to use propositional encoding,
and the problem can be decided directly using fixed points.

We demonstrated an application related to atomless BA and Lindenbaum-Tarski
algebras, and sketched how similar developments are related to AI
safety.

\section*{Acknowledgments}

I would like to thank Enrico Franconi, Lucca Tiemens, and Paweł Parys,
for plenty of useful discussions.

\section*{Intellectual Property}

Some of the methods discussed here are patented.

\bibliographystyle{plain}
\bibliography{references}

\end{document}